\numberwithin{equation}{section}
\newcommand{\innerp}[1]{\langle {#1} \rangle}
\newcommand{\norm}[1]{\|{#1}\|_2}
\newcommand{\norms}[1]{\|{#1}\|}
\newcommand{\abs}[1]{\lvert#1\rvert}
\newcommand{\PP}{{\mathbb P}}
\newcommand{\E}{{\mathbb E}}
\newcommand{\R}{{\mathbb R}}
\newcommand{\T}{\top}
\newcommand{\C}{{\mathbb C}}
\newcommand{\vx}{{\mathbf x}}
\newcommand{\vy}{{\mathbf y}}
\newcommand{\vz}{{\mathbf z}}
\newcommand{\ve}{{\mathbf e}}
\newcommand{\vA}{{\mathbf A}}
\newtheorem{definition}{Definition}[section]
\newtheorem{corollary}[definition]{Corollary}
\newtheorem{prop}[definition]{Proposition}
\newtheorem{theorem}[definition]{Theorem}
\newtheorem{remark}[definition]{Remark}
\date{}
\begin{document}
\baselineskip 18pt
\bibliographystyle{plain}
\title[RIP for structural random matrices]{Improved bounds for the RIP of
Subsampled Circulant matrices }

\author{Meng Huang}
\address{LSEC, Inst.~Comp.~Math., Academy of
Mathematics and System Science,  Chinese Academy of Sciences, Beijing, 100091, China}
\email{hm@lsec.cc.ac.cn}

\author{Yuxuan Pang}
\address{LSEC, Inst.~Comp.~Math., Academy of
Mathematics and System Science,  Chinese Academy of Sciences, Beijing, 100091, China}
\email{pangyuxuan@lsec.cc.ac.cn}

\author{Zhiqiang Xu}
\thanks{Zhiqiang Xu was supported  by NSFC grant (91630203, 11331012, 11688101) and by National Basic Research Program of China (973 Program 2015CB856000).}
\address{LSEC, Inst.~Comp.~Math., Academy of
Mathematics and System Science,  Chinese Academy of Sciences, Beijing, 100091, China\newline
School of Mathematical Sciences, University of Chinese Academy of Sciences, Beijing 100049, China}
\email{xuzq@lsec.cc.ac.cn}

\begin{abstract}
In this paper, we study the restricted isometry property of partial random circulant matrices. For a bounded subgaussian generator with independent entries, we prove that the partial random circulant matrices satisfy $s$-order RIP with high probability if one chooses $m\gtrsim s \log^2(s)\log (n)$ rows  randomly where  $n$ is the vector length.
This improves  the  previously known bound  $m \gtrsim s \log^2 s\log^2 n$.
\end{abstract}
\maketitle
\section{Introduction}
\subsection{Compressed sensing}
The aim of
compressed sensing \cite{candes2006robust,donoho2006compressed,foucart2013mathematical,rauhut2010compressive}
 is to recover $s$-sparse signals $\vx\in \C^n$ from the linear measurements $\vy=\vA\vx $ and the compressed sensing matrix  $\vA\in \C^{m\times n}$ with $m<n$. Here, we say $\vx\in \C^n$ is $s$-sparse if $\|\vx\|_0\leq s$ where $\|\vx\|_0$ denotes the number of nonzero entries of $\vx$.
 A naive approach for reconstructing $\vx$ is to solve the following $\ell_0$-minimization problem
\begin{equation*}
(P_0)\quad    \mathop{\min} \limits_{\vz}  \norms{\vz}_0 \qquad \text{subject to} \quad \vA\vz=\vy.
\end{equation*}
However, the $\ell_0$-minimization problem is NP-hard \cite{natarajan1995sparse} which is not tractable. A natural  approach is to relax $\ell_0$-minimization to $\ell_1$-minimization \cite{candes2006robust,chen2001atomic,donoho2006compressed,foucart2013mathematical}, i.e.,
\begin{equation*}
   (P_1)\quad  \mathop{\min} \limits_{\vz}  \norms{\vz}_1 \qquad \text{subject to} \quad \vA\vz=\vy.
\end{equation*}
The $(P_1)$ is a convex optimization problem which can be solved efficiently.
To guarantee the reconstruction of the  sparse signals $\vx$ by $(P_1)$, it suffices to require the matrix $\vA$ satisfying restricted isometry property (RIP).
For a matrix $\vA\in \R^{m\times n}$ and an integer  $s\in [1,n)$, we say $\vA$ satisfies $s$-order RIP with  constant $\delta_s\in [0,1)$ if
\begin{equation} \label{RIP}
  (1-\delta_s)\norm{\vx}^2\le \norm{\vA\vx}^2 \le (1+\delta_s)\norm{\vx}^2
\end{equation}
holds for all $s$-sparse signals $\vx$. It has been proved that if the matrix $\vA$ satisfies $t\cdot s$-order  RIP with $\delta_{ts}<\sqrt{1-\frac{1}{t}}$ for $t>4/3$, then every $s$-sparse vector $\vx$ can be  reconstructed  by solving $(P_1)$ \cite{CJT, CZRIP}. Moreover, RIP is also employed to study the performance of greedy  algorithms for the recovery of sparse signals, such as OMP \cite{zhang2011sparse}, OMMP \cite{OMMP}, CoSaMP \cite{foucart2012sparse,needell2009cosamp}, iterative hard thresholding \cite{blumensath2009iterative} and hard thresholding pursuit \cite{foucart2011hard}. Hence, one is interested in constructing RIP matrices $\vA\in \C^{m\times n}$ with $m$ being as small as possible. A popular method for constructing RIP matrices is to use random matrices.  For example, Gaussian random matrix $\frac{1}{\sqrt{m}}\vA\in \R^{m\times n}$ satisfies $s$-order RIP with high probability provided that $m\ge Cs\log (n/s)$,
where the entries of $\vA$ are independent standard normal random variables. From the  Gelfand widths, the lower  bound $Cs\log (n/s)$ is optimal up to a constant \cite{foucart2010gelfand}. Others random matrices which can achieve this  bound include Bernoulli matrices and subgaussian matrices \cite{mendelson2008uniform,rauhut2010compressive}.
 In practical applications, one prefers structure random matrices  since they  can make the recovery algorithms more efficient. An important example of structure random matrices is partial  Fourier matrices whose rows are selected randomly from the discrete Fourier matrix. It has been showed that the partial Fourier matrices satisfy $s$-order RIP  with high probability provided that
$
m\ge Cs\log^2 s \log n ,
$
(see \cite{haviv2017restricted,rauhut2010compressive,rudelson2008sparse,bourgain2014improved}).

The aim of this paper is to study the RIP of the partial random circulant matrices. Compared to Bernoulli or Gaussian matrices, the partial random circulant matrices have the advantage that they reduce the generation of only $n$ independent random variables instead of $n^2$. More importantly, they admit fast matrix-vector multiplication and arise naturally in certain applications such as  in radar, aperture imaging \cite{haupt2010toeplitz,romberg2009compressive} as well as MR imaging \cite{liang2009toeplitz}. Hence, they attract much attention.

\subsection{Related work}
Assume that $\xi=(\xi_1,\ldots,\xi_n)\in \C^n$. We define the circulant matrix generated by $\xi$ as
\begin{eqnarray*}
  \vA_{\xi} &:=& \left(
                \begin{array}{cccc}
                  \xi_1 & \xi_n & \cdots & \xi_2 \\
                  \xi_2 & \xi_1 & \cdots & \xi_3 \\
                  \vdots & \vdots & \ddots & \vdots \\
                  \xi_n & \xi_{n-1} & \cdots & \xi_1 \\
                \end{array}
              \right)\in \C^{n\times n}.
\end{eqnarray*}
For a multiset $\Omega$ in $\{1,\ldots,n\}$ with cardinality $m$, let $P_\Omega:\C^n\rightarrow \C^m$ denote the projection operator that restricts a vector $\vx\in \C^n$ to its entries in $\Omega$. Then the corresponding partial circulant matrix generated by  $\xi$ is defined as
\begin{equation*}
  \Phi_\xi=\frac{1}{\sqrt{m}}P_\Omega \vA_\xi \in \C^{m\times n}.
\end{equation*}

We next introduce another structured random matrices. For $\xi=(\xi_1,\ldots,\xi_{2n-1})\in \C^{2n-1}$, the Toeplitz matrix $T_\xi$ is defined as
\begin{equation*}
  T_{\xi}=\left(
                \begin{array}{cccc}
                  \xi_n & \xi_{n-1} & \cdots & \xi_1 \\
                  \xi_{n+1} & \xi_n & \cdots & \xi_{2} \\
                  \vdots & \vdots & \ddots & \vdots \\
                  \xi_{2n-1} & \xi_{2n-2} & \cdots & \xi_{n} \\
                \end{array}
              \right)
\end{equation*}
and the Hankel matrix is defined as  $H_{\xi}=T_{\xi}J_n$  where $J_n=[\ve_n,\ldots,\ve_1]$ and $\ve_j,j=1,\ldots,n$ are the standard orthogonal vectors.
For a multiset $\Omega$ of $\{1,\ldots,n\}$ with cardinality $m$, the corresponding partial Toeplitz matrix
\begin{equation} \label{hankel matrix}
  \Psi_\xi=\frac{1}{\sqrt{m}}P_\Omega T_\xi \in \C^{m\times n}.
\end{equation}
Similarly, we can define the partial Hankel matrix.
  A simple observation is that
$\frac{1}{\sqrt{m}}P_\Omega T_\xi$ satisfies $s$-order RIP with the constant $\delta_s$ if and only if $\frac{1}{\sqrt{m}}P_\Omega H_\xi$ satisfies the same property.
As we will see later, a Toeplitz matrix  can be embedded in a circulant matrix of twice the dimension. Hence, many RIP results for partial random circulant matrices can be extended to the partial random Toeplitz matrices as well as partial random Hankel matrices.

 In \cite{romberg2009compressive}, under the setup
 in which both $\Omega$ as well as the generating vector $\xi$ are chosen at random,
 Romberg proves that $m\gtrsim s\log^6 n$ measurements are sufficient to guarantee  $\Phi_\xi=\frac{1}{\sqrt{m}}P_\Omega \vA_\xi \in \C^{m\times n}$ satisfying RIP.   For an arbitrary fixed selection of  $\Omega\subset \{1,\ldots,n\}$, the first theoretical results are established in \cite{bajwa2007toeplitz} with showing that $m\times n$ partial random circulant matrices satisfy $s$-order RIP  with high probability provided $m\gtrsim s^3\log n$. This is then improved by Bajwa et al. \cite{bajwa2008compressed,haupt2010toeplitz} to $m \gtrsim s^2\log n$ for general matrices whose entries were drawn from bounded or Gaussian distribution. Later, Rauhut, Romberg and Tropp prove that $m\gtrsim s^{3/2} \log^{3/2} n$ measurements are enough. And the result is then improved by Krahmer, Mendelson and Rauhut \cite{krahmer2014suprema} to $O(s\log^2 s \log^2 n)$. Recently, Mendelson, Rauhut and Ward \cite{mendelson2016improved} prove that $O(s \log^2 s \log(\log s) \log n)$ measurements for partial random circulant matrix with random sampling set $\Omega$ and Gaussian random generator are sufficient to recover all $s$-sparse vectors with high probability via $\ell_1$-minimization, but it does not establish any type of RIP.

\subsection{Main results}
In this paper, we study the RIP of the  partial random circulant matrices  with bounded or Gaussian entries, where the subsampling locations  $\Omega$ are selected at random from $\{1,\ldots,n\}$. We show that, if $m\gtrsim s\log^2 s \log n $ then   $\Phi_\xi= \frac{1}{\sqrt{m}}P_\Omega \vA_\xi$
satisfies $s$-order RIP with high probability where $\xi$ is a bounded random vector.
 This improves upon the best previously known bound  $O(s \log^2 s\log^2 n)$ \cite{krahmer2014suprema}.
 Some suitable bounded random vectors $\xi$ include uniform distribution $\xi_i\sim \mathrm{U}(-\sqrt{3},\sqrt{3})$ and Rademacher vector with $\mathbb{P}(\xi_i=\pm 1)=1/2$.

\begin{theorem}\label{bound circulant}
Let $\xi=(\xi_1,\ldots,\xi_n)\in \C^n$ be a random vector whose entries are i.i.d. realizations of bounded zero-mean random variables satisfying $\E \xi_j^2=1$ and  $\abs\xi_j\le c$ for some $c\ge 1$.
Suppose that $\delta>0$ is a sufficient small constant
 and  $m \gtrsim \log^2 (1/\delta)\cdot\delta^{-2}s\log^2(s/\delta)\log n$. Let $\Omega$ be a multiset of $m$ uniform and independent random elements of $\{1,\ldots,n\}$ and $\Phi_\xi= \frac{1}{\sqrt{m}}P_\Omega \vA_\xi  \in \C^{m\times n}$ be a partial random circulant matrix generated by $\xi$ and  $\Omega$. If $s \lesssim n/(\log^4 n)$, then the matrix $\Phi_\xi$  satisfies the restricted isometry property with order $s$ and constant $\delta_s\le \delta$
with probability at least $1-2^{-C\log n\log(s/\delta)}-n^{-\log n \log^2 s}$.
\end{theorem}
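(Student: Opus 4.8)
The plan is to separate the two independent sources of randomness — the generator $\xi$ and the sampling multiset $\Omega$ — by conditioning on $\xi$ first. Write $\vy_1,\dots,\vy_n\in\C^n$ for the rows of the full circulant matrix $\vA_\xi$; each $\vy_k$ is a cyclic shift of $\xi$, so $\norms{\vy_k}_\infty\le c$ deterministically. Since $\Omega$ consists of $m$ i.i.d.\ uniform elements of $\{1,\dots,n\}$, we have $\E_\Omega\bigl[\tfrac1m\sum_{k\in\Omega}\abs{\innerp{\vy_k,\vx}}^2\bigr]=\tfrac1n\sum_{k=1}^n\abs{\innerp{\vy_k,\vx}}^2=\tfrac1n\norm{\vA_\xi\vx}^2$, which suggests the splitting
\[
  \norm{\Phi_\xi\vx}^2-\norm{\vx}^2
  \;=\;\underbrace{\Bigl(\tfrac1m\sum_{k\in\Omega}\abs{\innerp{\vy_k,\vx}}^2-\tfrac1n\norm{\vA_\xi\vx}^2\Bigr)}_{\mathrm{I}(\vx)}
  \;+\;\underbrace{\Bigl(\tfrac1n\norm{\vA_\xi\vx}^2-\norm{\vx}^2\Bigr)}_{\mathrm{II}(\vx)} .
\]
Since $\delta_s=\sup_\vx\bigl|\norm{\Phi_\xi\vx}^2-\norm{\vx}^2\bigr|$ over $s$-sparse unit vectors $\vx$, it suffices to prove $\sup_\vx\abs{\mathrm{I}(\vx)}\le\delta/2$ and $\sup_\vx\abs{\mathrm{II}(\vx)}\le\delta/2$, each on an event of the stated probability.

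The term $\mathrm{II}$ depends only on $\xi$ and asserts that the \emph{square} circulant matrix $\tfrac1{\sqrt n}\vA_\xi$ is itself an approximate isometry on $s$-sparse vectors, i.e.\ that $\tfrac1{\sqrt n}\vA_\xi$ satisfies $s$-order RIP with constant $\delta/2$. This is exactly the known RIP estimate for partial random circulant matrices with an \emph{arbitrary} sampling set — here the full set, so $m=n$ rows — which holds once $n\gtrsim\delta^{-2}s\log^2 s\log^2 n$ (the bound of \cite{krahmer2014suprema}); under the hypothesis $s\lesssim n/\log^4 n$, with the implied constant allowed to depend on $\delta$ and using $\log s\le\log n$, this requirement is met. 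Concretely one writes $\mathrm{II}(\vx)=\innerp{(\tfrac1n\vA_\xi^{*}\vA_\xi-I)\vx,\vx}$, expresses $\tfrac1n\vA_\xi^{*}\vA_\xi-I$ through the cyclic autocorrelation of $\xi$, and controls the resulting chaos process in $\xi$ by Talagrand's $\gamma_2$ machinery exactly as in \cite{krahmer2014suprema}; a naive union bound over lags would be far too lossy here, so the generic chaining really is needed, and the $\log^2 n$ it costs is harmless because $n$ is the largest parameter. The resulting failure probability has order $n^{-\log n\log^2 s}$, the second term in the statement.

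For the term $\mathrm{I}$ we condition on $\xi$ lying in the event $\mathcal E$ that $\tfrac1{\sqrt n}\vA_\xi$ has $s$-order RIP constant at most $\delta/2\le 1/2$; this has the probability just discussed, and on $\mathcal E$ we get both $\abs{\mathrm{II}(\vx)}\le\delta/2$ and the a priori bound $\tfrac1n\norm{\vA_\xi\vx}^2\le\tfrac32$ for every $s$-sparse unit $\vx$. Fixing $\xi\in\mathcal E$, $\mathrm{I}(\vx)=\tfrac1m\sum_{k\in\Omega}\bigl(f_\vx(\vy_k)-\E_\Omega f_\vx\bigr)$ with $f_\vx(\cdot)=\abs{\innerp{\cdot,\vx}}^2$ is a centred i.i.d.\ empirical process whose summands obey $\abs{f_\vx(\vy_k)}\le\norms{\vy_k}_\infty^2\norms{\vx}_1^2\le c^2 s$ and whose weak variance is $\tfrac1n\sum_k f_\vx(\vy_k)^2\le c^2 s\cdot\tfrac1n\norm{\vA_\xi\vx}^2\le\tfrac32 c^2 s$. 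Thus the rows of $\vA_\xi$ play the role of a bounded orthonormal system with constant $c$, and bounding $\sup_\vx\abs{\mathrm{I}(\vx)}$ is precisely the RIP-for-subsampled-BOS problem restricted to $s$-sparse vectors. Decomposing over the $\binom ns$ choices of support, Rudelson--Vershynin symmetrisation followed by a generic-chaining (Dudley) estimate for the resulting Rademacher/Gaussian chaos indexed by $s$-sparse unit vectors yields
\[
  \E_\Omega\sup_\vx\abs{\mathrm{I}(\vx)}\;\lesssim\;c\,\sqrt{\tfrac{s\log^2(s)\log n}{m}}\;+\;\tfrac{c^2 s\log^2(s)\log n}{m},
\]
the two scales of the chaining producing the $\log^2 s$ (covering a single $s$-dimensional coordinate subspace) and the $\log n$ (from $\log\binom ns\lesssim s\log n$) factors — this is where the single power of $\log n$, rather than $\log^2 n$, is gained. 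A concentration inequality for this supremum (Talagrand's inequality, or the tail form of the chaos estimate) upgrades the bound to one that fails with probability at most $2^{-C\log n\log(s/\delta)}$; for $m\gtrsim\log^2(1/\delta)\,\delta^{-2}s\log^2(s/\delta)\log n$ the right-hand side drops below $\delta/2$, and a union bound over the two events completes the argument.

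The main obstacle is the term $\mathrm{I}$: producing the single factor $\log n$ requires the sharp multi-scale chaining/covering analysis for subsampled bounded orthonormal systems (in the spirit of Rudelson--Vershynin, Cheraghchi--Guruswami--Velingker, Haviv--Regev and Bourgain), carried out here with the extra complication that $\tfrac1n\vA_\xi^{*}\vA_\xi$ is only \emph{restrictedly} close to the identity rather than exactly orthonormal — so the RIP output of step $\mathrm{II}$ must be fed into the variance estimates — together with careful bookkeeping of how the chaining integrals depend on $\delta$ and $c$, which is what generates the $\log^2(1/\delta)\,\delta^{-2}$ and $\log^2(s/\delta)$ factors in the final condition on $m$.
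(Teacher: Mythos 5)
Your decomposition of $\norm{\Phi_\xi\vx}^2-\norm{\vx}^2$ into the subsampling error $\mathrm{I}(\vx)$ and the full-circulant error $\mathrm{II}(\vx)$ is exactly the architecture of the paper's proof, and your treatment of $\mathrm{II}$ --- invoking the Krahmer--Mendelson--Rauhut RIP bound with $\Omega=\{1,\ldots,n\}$ so that $n\gtrsim \delta^{-2}s\log^2 s\log^2 n$ is implied by $s\lesssim n/\log^4 n$ --- is precisely the paper's Corollary \ref{co:Axi}, with the same failure probability $n^{-\log n\log^2 s}$.

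The gap is in term $\mathrm{I}$. The paper does not prove the subsampling estimate at all: it invokes Haviv--Regev (Theorem \ref{lemma2}) as a black box, which for an \emph{arbitrary fixed} matrix $M$ and \emph{all} vectors $\vx$ gives $\frac1m\sum_{j\in\Omega}|(M\vx)_j|^2 = \frac1n(1\pm\varepsilon)\norm{M\vx}^2 \pm \eta\norms{\vx}_1^2\norms{M}_\infty^2$ with $m= O(\log^2(1/\varepsilon)\varepsilon^{-1}\eta^{-1}\log n\log^2(1/\eta))$; taking $M=\vA_\xi$, $\eta=\delta/(4c^2s)$ and $\norms{\vx}_1^2\le s\norm{\vx}^2$ finishes the proof with no sparsity-dependent chaining whatsoever (note the $\ell_1$ error term makes any union bound over supports unnecessary, and no orthogonality or RIP of the rows is needed for this step --- only $\norms{\vA_\xi}_\infty\le c$). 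You instead propose to re-derive this estimate by ``Rudelson--Vershynin symmetrisation followed by a generic-chaining (Dudley) estimate'' over the $\binom{n}{s}$ supports, and assert that this yields $\E_\Omega\sup_\vx|\mathrm{I}(\vx)|\lesssim c\sqrt{s\log^2(s)\log(n)/m}+\cdots$. That assertion is exactly the hard part of the entire improvement and is not delivered by the argument you sketch: the symmetrisation-plus-Dudley route for subsampled bounded systems is the classical Rudelson--Vershynin analysis, and it is well documented that it loses at least one additional logarithmic factor (giving $m\gtrsim s\log^2 s\log n\cdot\log(s\log n)$ or worse); eliminating that loss is the content of the Haviv--Regev argument, which proceeds by a moment/counting method rather than a Dudley integral. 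You correctly identify this as ``the main obstacle,'' but identifying it is not the same as overcoming it, so as written the proof of the single-$\log n$ bound for $\mathrm{I}$ --- i.e.\ the theorem's entire advance over the known $s\log^2 s\log^2 n$ bound --- is missing. The fix is simply to cite Theorem \ref{lemma2} for the fixed matrix $\vA_\xi$ as the paper does.
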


 In the practical applications, one usually requires that  $s<<n$.
Hence, the assumption of  $s \lesssim n/\log^4 n$ in Theorem \ref{bound circulant} is enough for the practical applications. The idea for proving Theorem \ref{bound circulant} is to use result from \cite{haviv2017restricted} (see Theorem \ref{lemma2} in Section 2), which shows the connection between $\frac{1}{n}\|M \vx\|_2^2$ and $\frac{1}{m}\|P_\Omega M \vx\|_2^2$ for any matrix $M\in \C^{n\times n}$ (see (\ref{eq:deng}) for detail).
In \cite{haviv2017restricted}, Haviv and Regev  consider the case where $M$ is a DFT matrix for which $\frac{1}{n}\|M \vx\|_2^2=\|\vx\|_2^2$.
A particular case of Theorem \ref{lemma1} from \cite{krahmer2014suprema} shows that the circulant matrix $\vA_\xi$ satisfies $\frac{1}{n}\|\vA_\xi\vx\|^2\approx \|\vx\|^2$ provided $s\lesssim n/\log^4n$ (see Corollary \ref{co:Axi}).
Combining them, we can arrive at Theorem \ref{bound circulant}.
The proof just combines the known results, but Theorem \ref{bound circulant} definitely goes  far beyond the current state-of-the-art.

\begin{remark}
We  consider the case with removing the assumption of $\xi_j$ being a bounded random variable.
Assume that $\xi \in \C^n$ is a random vector with independent, mean $0$ and variance $1$, $L$-subgaussian entries.
We can obtain that  $\PP\left(\abs{\xi_j}> L\sqrt{2\log n}\right)\le 2/n^2$ for any $\xi_j$. Then
\[
\PP\left(\max\limits_{j}\abs{\xi_j}\leq  L\sqrt{2\log n}\right)\geq 1- 2/n,
\]
which implies that $\abs{\xi_j} \leq L\sqrt{2\log n}$ for all $j=1,\ldots,n$ with probability at least $1-2/n$.
 Since the bound $L\sqrt{2\log n}$ is not a constant,
 it leads to slightly larger samples $m\gtrsim s\log^2(s\log n)\log^2 n$ for the partial random circulant matrix $\Phi_\xi$ to satisfy $s$-order RIP. Nevertheless, the result can match the best bound  $m\gtrsim s\log^2( s) \log^2 (n)$ which is obtained  in \cite{krahmer2014suprema}  provided $s\geq \log n$.
\end{remark}

\begin{remark}
 Note that a Toeplitz matrix can be embedded in a circulant matrix of twice the dimension. Then we can  show that Toeplitz matrix satisfies $\frac{1}{n}\|T_\xi\vx\|^2\approx \|\vx\|^2$ provided $s\lesssim n/\log^4n$.  Hence, our result can be extended to the partial random Toeplitz matrices.
\end{remark}
\section{Preliminaries}

We first introduce the definition of $L$-subgaussian random vectors which include the Rademacher vectors as well as the standard Gaussian vectors as special cases.
 To state conveniently, we use  $\mathcal{S}^{n-1}$ to denote the unit sphere in $\C^n$.
\begin{definition}[$L$-subgaussian]
A mean $0$ random vector $X\in\C^n$ is called isotropic if for every $\theta\in \mathcal{S}^{n-1}$, $\E\abs{\innerp{X,\theta}}=1$. A random vector $X$ is called $L$-subgaussian if it is isotropic and $\PP(\abs{\innerp{X,\theta}}\ge t)\le 2\exp(-t^2/2L^2)$ for every $\theta \in \mathcal{S}^{n-1}$ and any $t\ge 0$.
\end{definition}
 We introduce some known results which are useful in our analysis.

\begin{theorem}  (\cite[Theorem 4.1]{krahmer2014suprema}) \label{lemma1}
For any fixed subset $\Omega\subset \{1,\ldots,n\}$ and a random vector $\xi \in \C^n$ with independent, mean $0$ and variance $1$, $L$-subgaussian entries. Let $\Phi_\xi=\frac{1}{\sqrt{m}}P_{\Omega} \vA_\xi\in\R^{m\times n}$ be a partial random circulant matrix generated by $\xi$ and $\Omega$. If
\begin{equation*}
  m\ge c\delta^{-2}s \log^2 s\log^2 n,
\end{equation*}
then with probability at least $1-n^{-\log n \log^2 s}$, the matrix $\Phi_\xi$ satisfies the restricted isometry property with constant $\delta_s\le \delta$. The constant $c>0$ is universal.
\end{theorem}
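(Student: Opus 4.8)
The plan is to recognize the restricted isometry constant as the supremum of a second-order chaos process in the generator $\xi$ and then to invoke the abstract deviation inequality for such suprema that is the technical heart of \cite{krahmer2014suprema}. First I would linearize the dependence on $\xi$. The circulant action is a convolution, so for each $\vx$ there is a matrix $V_\vx\in\C^{m\times n}$, depending linearly on $\vx$, with $\Phi_\xi\vx=V_\vx\xi$; explicitly, diagonalizing the circulant matrix by the (unnormalized) DFT matrix $F$ via $\vA_\xi=\frac1n F^*\diag(F\xi)F$ and using $\diag(F\xi)F\vx=\diag(F\vx)F\xi$ gives $V_\vx=\frac{1}{\sqrt m\,n}P_\Omega F^*\diag(F\vx)F$. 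Since the entries of $\xi$ are independent, mean $0$ and variance $1$, the vector is isotropic, so $\E\norm{V_\vx\xi}^2=\normf{V_\vx}^2=\norm{\vx}^2$. Hence
\begin{equation*}
\delta_s=\sup_{\vx\in D_{s,n}}\bigl|\,\norm{\Phi_\xi\vx}^2-\norm{\vx}^2\,\bigr|=\sup_{V\in\mathcal V_s}\bigl|\,\norm{V\xi}^2-\E\norm{V\xi}^2\,\bigr|,
\end{equation*}
where $D_{s,n}$ is the set of $s$-sparse unit vectors and $\mathcal V_s=\{V_\vx:\vx\in D_{s,n}\}$. Thus $\delta_s$ is exactly the supremum of the centered chaos indexed by the matrix family $\mathcal V_s$.

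Next I would apply the chaos deviation bound of \cite{krahmer2014suprema} to $\mathcal V_s$. For an $L$-subgaussian $\xi$ that result controls both $\E\sup_{V\in\mathcal V_s}|\norm{V\xi}^2-\E\norm{V\xi}^2|$ and its upper tail in terms of three geometric quantities: the two radii $d_F(\mathcal V_s)=\sup_{V\in\mathcal V_s}\normf{V}$ and $d_{2\to2}(\mathcal V_s)=\sup_{V\in\mathcal V_s}\normm{V}$, and the Talagrand functional $\gamma_2(\mathcal V_s,\normm{\cdot})$. Up to a factor $L^2$, the expectation is at most $\gamma_2(\gamma_2+d_F)+d_F\,d_{2\to2}$, and the tail is subexponential with scale parameters built from $d_{2\to2}(\gamma_2+d_F)$ and $d_{2\to2}^2$. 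So it remains to estimate these three quantities for $\mathcal V_s$ and feed them into the inequality.

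The two radii are quick. From the isotropy computation above, $d_F(\mathcal V_s)=1$. For the operator norm, since $F/\sqrt n$ is unitary and $\normm{\diag(F\vx)}=\norms{F\vx}_\infty$, we get $\normm{V_\vx}\le\frac1{\sqrt m}\norms{F\vx}_\infty$; bounding $\norms{F\vx}_\infty\le\norms{\vx}_1\le\sqrt s\,\norm{\vx}$ for $s$-sparse $\vx$ yields $d_{2\to2}(\mathcal V_s)\le\sqrt{s/m}$. The real work is the $\gamma_2$ functional, which I would bound by Dudley's entropy integral. Linearity and $\normm{V_{\vx-\vy}}\le\frac1{\sqrt m}\norms{F(\vx-\vy)}_\infty$ reduce the covering numbers of $\mathcal V_s$ to those of the $s$-sparse unit sphere $D_{s,n}$ in the norm $\norms{F\cdot}_\infty$, and after rescaling
\begin{equation*}
\gamma_2(\mathcal V_s,\normm{\cdot})\lesssim\frac1{\sqrt m}\int_0^{\sqrt s}\sqrt{\log N(D_{s,n},\norms{F\cdot}_\infty,v)}\,dv.
\end{equation*}
The main obstacle is to estimate this entropy on two scales and splice them. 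At coarse scales a volumetric estimate over the $\binom ns$ supports gives $\log N\lesssim s\log(n/v)$, which stays integrable as $v\to0$; at finer scales I would use the empirical (Maurey) method, approximating $F\vx$ by an average of $\ell\asymp s\log n/v^2$ of the unit-modulus columns $Fe_k$ and controlling the resulting $\ell_\infty$-error over the $n$ frequencies by a Bernstein/union-bound argument, which yields $\log N\lesssim\ell\log n\asymp v^{-2}s\log^2 n$. Splitting Dudley's integral at the scale where the two bounds cross gives $\gamma_2(\mathcal V_s,\normm{\cdot})\lesssim\frac1{\sqrt m}\sqrt s\,\log s\,\log n$.

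Finally I would combine the estimates. Requiring $\gamma_2\lesssim\delta$ forces $m\gtrsim\delta^{-2}s\log^2 s\log^2 n$, which simultaneously drives $d_{2\to2}\le\sqrt{s/m}$ below $\delta$, so the expectation bound $\gamma_2(\gamma_2+d_F)+d_F\,d_{2\to2}\lesssim\delta$ holds. Inserting $d_{2\to2}\le\sqrt{s/m}$ and $\gamma_2+d_F\lesssim1$ into the tail inequality at deviation level $t\asymp\delta$ gives a bound of the form $2\exp(-c\,\delta^2 m/s)$; with the chosen $m$ the exponent is of order $\log^2 n\log^2 s$, which is exactly the stated failure probability $n^{-\log n\log^2 s}$. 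The crux throughout is the two-regime covering-number estimate that feeds the $\gamma_2$ bound; the radii and the final substitution into the chaos inequality are routine.
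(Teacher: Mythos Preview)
Your sketch is a faithful outline of the proof in \cite{krahmer2014suprema}: the linearization $\Phi_\xi\vx=V_\vx\xi$, the identification of $\delta_s$ as a supremum of a second-order chaos, the application of the abstract chaos tail bound, and the two-regime covering-number estimate for $\gamma_2(\mathcal V_s,\normm{\cdot})$ are exactly the ingredients used there. The radii computations and the Dudley integral splitting are correct in spirit, and the final bookkeeping leading to $m\gtrsim\delta^{-2}s\log^2 s\log^2 n$ and failure probability $n^{-\log n\log^2 s}$ is right.

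However, note that the present paper does \emph{not} prove this statement at all: Theorem~\ref{lemma1} is simply quoted from \cite[Theorem 4.1]{krahmer2014suprema} as a black box, to be specialized (with $\Omega=\{1,\ldots,n\}$) in Corollary~\ref{co:Axi}. So there is no ``paper's own proof'' to compare against; your write-up is effectively a summary of the original Krahmer--Mendelson--Rauhut argument rather than of anything in this paper. If the intent was to supply a proof where the paper gives none, your outline is appropriate and correct; if the intent was to match the paper's treatment, the right answer is that the paper merely cites the result.
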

If we take $\Omega=\{1,\ldots,n\}$, then we have  the following corollary:
\begin{corollary}\label{co:Axi}
Suppose that $\xi\in \C^n$ is a random vector  with independent, mean $0$ and variance $1$, $L$-subgaussian entries. Suppose that $s\lesssim n/\log^4 n$. Then
\[
(1-\delta)\|\vx\|^2\leq \frac{1}{n}\|\vA_\xi\vx\|^2\leq (1+\delta)\|\vx\|^2\quad \text{ for all $s$-sparse $\vx\in \C^n$ }
\]
holds  with probability at least $1-n^{-\log n \log^2 s}$.
\end{corollary}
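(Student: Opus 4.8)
The plan is to deduce the corollary directly from Theorem \ref{lemma1} by specializing the sampling set to all of $\{1,\ldots,n\}$. Taking $\Omega=\{1,\ldots,n\}$ forces $m=n$, so the partial random circulant matrix collapses to $\Phi_\xi=\frac{1}{\sqrt n}P_\Omega\vA_\xi=\frac{1}{\sqrt n}\vA_\xi$, and the RIP inequality \eqref{RIP} for $\Phi_\xi$ reads exactly
\[
(1-\delta)\norm{\vx}^2\ \le\ \tfrac1n\norm{\vA_\xi\vx}^2\ \le\ (1+\delta)\norm{\vx}^2
\]
for all $s$-sparse $\vx\in\C^n$, which is precisely the assertion of the corollary. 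So the whole argument reduces to checking that the hypotheses of Theorem \ref{lemma1} are met in this special case.

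The one point to verify is that the assumption $s\lesssim n/\log^4 n$ implies the sampling condition $m\ge c\delta^{-2}s\log^2 s\log^2 n$ of Theorem \ref{lemma1} when $m=n$. Since $s\le n$ we have $\log s\le\log n$, hence $s\log^2 s\log^2 n\le s\log^4 n$; therefore it suffices to take the implied constant in $s\lesssim n/\log^4 n$ large enough that $c\delta^{-2}s\log^4 n\le n$ (here $\delta$ is a fixed constant, so the factor $c\delta^{-2}$ is absorbed into the $\lesssim$). With this choice $n\ge c\delta^{-2}s\log^2 s\log^2 n$, Theorem \ref{lemma1} applies with $m=n$, and it delivers the displayed two-sided estimate with probability at least $1-n^{-\log n\log^2 s}$ — exactly the probability claimed.

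There is essentially no obstacle here: all of the analytic work (the generic-chaining / Dudley-type control of the quadratic process $\vx\mapsto\frac1n\norm{\vA_\xi\vx}^2-\norm{\vx}^2$ over $s$-sparse unit vectors) is already packaged inside Theorem \ref{lemma1}, and Corollary \ref{co:Axi} is simply its endpoint case $m=n$. The only mild care needed is the bookkeeping in the previous paragraph, replacing the two separate $\log^2 s$ and $\log^2 n$ factors by a single $\log^4 n$ so that the clean hypothesis $s\lesssim n/\log^4 n$ suffices; if one wished to keep the dependence on $\delta$ explicit one would instead write $s\lesssim \delta^2 n/\log^4 n$, but since $\delta$ is treated as a constant throughout this is immaterial.
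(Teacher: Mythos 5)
Your proposal is correct and is exactly the paper's argument: the authors obtain Corollary \ref{co:Axi} by taking $\Omega=\{1,\ldots,n\}$ (hence $m=n$) in Theorem \ref{lemma1}, with the hypothesis $s\lesssim n/\log^4 n$ ensuring $n\ge c\delta^{-2}s\log^2 s\log^2 n$ via $\log s\le \log n$. Your write-up simply makes the bookkeeping more explicit than the paper does.
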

 The above corollary shows that we can obtain $(1-\delta)\|\vx\|^2\leq \frac{1}{{n}}\|\vA_\xi \vx\|^2\leq (1+\delta)\|\vx\|^2$ for all $s$-sparse vector $\vx$ provided $s\lesssim n/\log^4n$.

We next introduce the main result in \cite{haviv2017restricted} which also plays an important role in our analysis.
\begin{theorem} (\cite[Theorem 4.1]{haviv2017restricted}) \label{lemma2}
For a sufficiently large $n$, a matrix $M\in \C^{n\times n} $, and sufficiently small $\varepsilon,\eta>0$, the following holds. For some $m=O\left(\log^2 (1/\varepsilon)\cdot\varepsilon^{-1}\eta^{-1}\log n \cdot\log^2(1/\eta)\right)$, let $\Omega$ be a multiset of $m$ uniform and independent random elements of $\{1,\ldots,n\}$. Then, with probability $1-2^{-\Omega(\log n \cdot\log(1/\eta))}$, it holds that for every $\vx\in \C^n$,
\begin{equation}\label{eq:deng}
  \frac{1}{n}(1-\varepsilon)\norm{M\vx}^2-\eta\norms{\vx}_1^2\norms{M}_\infty^2 \le\frac{1}{m}\sum_{j\in \Omega } |(M\vx)_j|^2\le \frac{1}{n}(1+\varepsilon)\norm{M\vx}^2+\eta\norms{\vx}_1^2\norms{M}_\infty^2,
\end{equation}
where  $\norms{M}_\infty:=\max_{i,j}\abs{M_{i,j}}$.
\end{theorem}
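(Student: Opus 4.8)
The plan is to read the middle quantity as an empirical average and to prove a \emph{uniform} deviation bound for it. Writing $\Omega=\{j_1,\dots,j_m\}$ with the $j_i$ i.i.d.\ uniform on $\{1,\dots,n\}$ and $\vy=M\vx$, set
\[
Z(\vx):=\frac1m\sum_{i=1}^m\Bigl(\abs{(M\vx)_{j_i}}^2-\tfrac1n\norm{M\vx}^2\Bigr),
\]
so that $Z(\vx)$ is a centered average of i.i.d.\ bounded variables and the assertion is exactly that $\abs{Z(\vx)}\le \frac{\varepsilon}{n}\norm{M\vx}^2+\eta\norms{\vx}_1^2\norms{M}_\infty^2$ holds for every $\vx$ at once. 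Both sides are homogeneous of degree two, so I may restrict to $\norm{\vx}=1$. The whole difficulty is the phrase ``at once'': for a single fixed $\vx$ the bound is an easy Bernstein estimate, and the task is to upgrade this to a bound uniform over the sphere without spending more than the advertised number of rows.

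First I would record the pointwise estimate. For fixed $\vx$ put $a:=\frac1n\norm{M\vx}^2$ and $b:=\norms{M}_\infty^2\norms{\vx}_1^2$; then each summand $\abs{(M\vx)_{j_i}}^2$ is at most $\norms{M\vx}_\infty^2\le b$ and has second moment $\E\abs{(M\vx)_{j_1}}^4\le \norms{M\vx}_\infty^2\cdot\frac1n\norm{M\vx}^2\le ab$. Bernstein's inequality then gives, at the target threshold $\varepsilon a+\eta b$,
\[
\PP\bigl(\abs{Z(\vx)}>\varepsilon a+\eta b\bigr)\le 2\exp\!\Bigl(-c\,\frac{m(\varepsilon a+\eta b)^2}{ab+b(\varepsilon a+\eta b)}\Bigr)\le 2\exp(-c'\,m\varepsilon\eta).
\]
The last inequality is the crucial point: because the variance proxy is the geometric-mean quantity $ab$ while the threshold mixes $\varepsilon a$ with $\eta b$, an AM--GM comparison of $(\varepsilon a+\eta b)^2$ with $ab$ produces the exponent $m\varepsilon\eta$ rather than $m\eta^2$, which is exactly what yields the linear dependence $\eta^{-1}$ in the number of rows instead of $\eta^{-2}$.

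Next I would make the estimate uniform. The vectors that are hard for the bound are the spiky ones, i.e.\ those with small $\norms{\vx}_1/\norm{\vx}$, so I would first prove the inequality on all $s$-sparse unit vectors, for a sparsity level $s\asymp\log^2(1/\eta)$, and then extend to arbitrary $\vx$. For the sparse step I cover the $s$-sparse sphere: there are $\binom ns\le(en/s)^s$ supports and a $\tfrac12$-net of each coordinate sphere of size $2^{O(s)}$, so the net has log-cardinality $O(s\log n)$. Applying the pointwise bound at each net point and a union bound succeeds provided $m\varepsilon\eta\gtrsim s\log n+\log n\log(1/\eta)$, which with $s\asymp\log^2(1/\eta)$ is exactly the hypothesis $m\gtrsim \varepsilon^{-1}\eta^{-1}\log n\,\log^2(1/\eta)$ and leaves failure probability $2^{-\Omega(\log n\log(1/\eta))}$; a standard successive-approximation argument then passes from the net to all $s$-sparse vectors, the additional $\log^2(1/\varepsilon)$ factor absorbing the accumulated error in the leading $\varepsilon a$ term. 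Finally, for a general unit $\vx$ I would use the dyadic decomposition $\vx=\sum_{k\ge0}\vx_{T_k}$ into blocks of $s$ coordinates of decreasing magnitude, with $\norm{\vx_{T_k}}\le \frac1{\sqrt s}\norms{\vx_{T_{k-1}}}_1$, apply the sparse bound to the diagonal terms $\vx_{T_k}^{*}N\vx_{T_k}$ and the restricted operator norm of $N:=A^{*}A-\frac1n M^{*}M$ (with $A:=\frac1{\sqrt m}P_\Omega M$) to the off-diagonal terms $\vx_{T_k}^{*}N\vx_{T_l}$, and sum: the $\ell_1$ slacks aggregate through $\sum_k\norms{\vx_{T_k}}_1=\norms{\vx}_1$ into the single term $\eta\norms{M}_\infty^2\norms{\vx}_1^2$.

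The hard part will be this last aggregation. Because $M$ need not be an isometry, the leading pieces $\frac{\varepsilon}{n}\norm{M\vx_{T_k}}^2$ cannot simply be summed back to $\frac{\varepsilon}{n}\norm{M\vx}^2$, and the cross terms $\vx_{T_k}^{*}M^{*}M\vx_{T_l}$ must be controlled; keeping the two error scales $\varepsilon a$ and $\eta b$ correctly separated through the decomposition, rather than collapsing them and thereby losing the $\eta^{-1}$ scaling, is the delicate bookkeeping on which the argument rests. A robust alternative that bypasses the ad hoc recombination is to bound $\E\sup_{\vx}\abs{Z(\vx)}$ directly by generic chaining (a suprema-of-chaos estimate in the spirit of \cite{krahmer2014suprema}), where the two terms of the conclusion arise naturally from the two scales of the associated $\gamma_2$-type entropy integral over the $\ell_1$-ball; I would fall back on this if the elementary net argument proved too lossy.
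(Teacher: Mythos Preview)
The paper does not prove this theorem. It is stated in Section~2 (Preliminaries) as a quotation of \cite[Theorem~4.1]{haviv2017restricted} and is used as a black box in the proof of Theorem~\ref{bound circulant}; no argument for it appears anywhere in the paper. There is therefore nothing in the paper to compare your proposal against.

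For what it is worth, your sketch is broadly in the spirit of the Haviv--Regev argument: a pointwise Bernstein bound exploiting the mixed threshold $\varepsilon a+\eta b$ to get the exponent $m\varepsilon\eta$, a net/chaining step over sparse vectors, and then an extension to general $\vx$ via a block decomposition. The parts you flag as delicate---passing from the net to all sparse vectors without losing more than the $\log^2(1/\varepsilon)$ factor, and recombining the cross terms when $M$ is not an isometry---are indeed where the real work lies in the original paper, and your outline does not yet pin them down; but since the present paper simply imports the result, a full reconstruction is not required here.
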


\section{Proofs of Theorem \ref{bound circulant}}
Before giving the proof of Theorem \ref{bound circulant}, we next introduce a proposition which shows that $\frac{1}{n}\|\vA_\xi \vx\|^2 \thickapprox \|\vx\|^2$ does not hold for all $\vx\in \C^n$ with high probability. Hence, to guarantee $\frac{1}{n}\|\vA_\xi\vx\|^2 \thickapprox \|\vx\|^2$ we need to require $\vx$ lies in some subset in $\C^n$. For example, in Theorem \ref{bound circulant}, we require $s=\|\vx\|_0\lesssim n/\log^4 n$. This also shows the essential difference between $\vA_\xi$ and Fourier matrices.

To state conveniently, we say a vector is a Gaussian random vector if the entries are i.i.d. standard Gaussian random variables.
\begin{prop}\label{pr:deng}
Let $\xi=(\xi_1,\ldots,\xi_n)\in \R^n$ be a Rademacher vector or Gaussian random vector and $\vA_\xi \in \R^{n\times n}$ be the random circulant matrix generated by $\xi$.
Then for any fixed $\epsilon>0$, there exists a vector $\vx \in \R^n$ and a positive constant $p_0$ so that
\begin{equation*}
  \PP\left(\frac{1}{n}\norm{\vA_\xi \vx}^2 < \epsilon \right) \ge p_0 \quad  \text{when $n$ is large enough}.
\end{equation*}
\end{prop}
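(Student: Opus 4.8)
The plan is to take the (normalized) all-ones vector as the witness $\vx$ and reduce the statement to the classical central limit theorem. Set
\[
\vx \;=\; \tfrac{1}{\sqrt n}(1,1,\ldots,1)^{\T}\in\R^n,
\]
which is a unit vector. Since every row of $\vA_\xi$ is a cyclic permutation of the tuple $(\xi_1,\ldots,\xi_n)$, each coordinate of $\vA_\xi\vx$ equals $\tfrac{1}{\sqrt n}\sum_{j=1}^n\xi_j$. Hence $\norm{\vA_\xi\vx}^2 = n\cdot\big(\tfrac{1}{\sqrt n}\sum_{j=1}^n\xi_j\big)^2 = \big(\sum_{j=1}^n\xi_j\big)^2$, so that $\tfrac1n\norm{\vA_\xi\vx}^2 = S_n^2$ where $S_n := \tfrac{1}{\sqrt n}\sum_{j=1}^n\xi_j$. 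The event to be analyzed is therefore $\{S_n^2<\epsilon\}=\{|S_n|<\sqrt\epsilon\}$, and it remains to show this has probability bounded below by a positive constant for large $n$.

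Next I would analyze $S_n$ according to the two cases. In the Gaussian case $S_n\sim\mathcal N(0,1)$ exactly for every $n$, so $\PP(|S_n|<\sqrt\epsilon)=\tfrac{1}{\sqrt{2\pi}}\int_{-\sqrt\epsilon}^{\sqrt\epsilon}e^{-t^2/2}\,dt=:p_0>0$ and the claim holds for all $n$. In the Rademacher case, the entries are i.i.d.\ with mean $0$ and variance $1$, so the central limit theorem gives $S_n\xrightarrow{d}Z\sim\mathcal N(0,1)$; since the Gaussian distribution function is continuous, $\PP(|S_n|<\sqrt\epsilon)\to\PP(|Z|<\sqrt\epsilon)>0$ as $n\to\infty$. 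Consequently, for all sufficiently large $n$ we have $\PP(|S_n|<\sqrt\epsilon)\ge\tfrac12\PP(|Z|<\sqrt\epsilon)=:p_0>0$, which is exactly the assertion of the proposition.

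There is no genuinely hard step here; the only thing that needs thought is the choice of test vector, together with keeping the order of quantifiers straight (the vector $\vx$ here may be taken independent of $\epsilon$, and only $p_0$ depends on $\epsilon$). It is worth noting that the vector $\vx$ above is maximally non-sparse: its normalized $\ell_1$-norm $\norms{\vx}_1=\sqrt n$ is as large as it can be. This is precisely why the proposition does not contradict Corollary~\ref{co:Axi} — the approximation $\tfrac1n\norm{\vA_\xi\vx}^2\approx\norm{\vx}^2$ really does require restricting $\vx$ to a small-$\ell_1$ (hence, after rescaling, a sparse) set, and the all-ones vector violates that restriction in the strongest possible way, which is the point this proposition is meant to illustrate.
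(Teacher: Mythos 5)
Your proof is correct and follows essentially the same route as the paper's: the all-ones test vector $\vx=\tfrac{1}{\sqrt n}(1,\ldots,1)^\T$, the identity $\tfrac1n\norm{\vA_\xi\vx}^2=\bigl(\tfrac{1}{\sqrt n}\sum_j\xi_j\bigr)^2$, exactness of the normal law in the Gaussian case, and the central limit theorem in the Rademacher case. (Incidentally, your value $\PP(|Z|<\sqrt\epsilon)=2\Phi(\sqrt\epsilon)-1$ is the correct one; the paper's $2(1-\Phi(\sqrt\epsilon))$ is a slip, though both are positive constants so the conclusion is unaffected.)
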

\begin{proof}
Setting $\vx =\frac{1}{\sqrt{n}}(1,\ldots,1)^\T$, we have
\begin{equation*}
  \frac{1}{n}\norm{\vA_\xi \vx}^2= \left(\frac{\xi_1+\cdots+\xi_n}{\sqrt{n}}\right)^2:=\zeta_n^2.
\end{equation*}
For Gaussian random vector $\xi$, a simple observation is that  $\zeta_n \sim \mathcal{N}(0,1)$. It implies that
\begin{equation*}
  \PP\left(\frac{1}{n}\norm{\vA_\xi \vx}^2 < \epsilon \right) = \PP(\abs{\zeta_n} < \sqrt{\epsilon})= 2(1-\Phi(\sqrt{\epsilon})):=p_0,
\end{equation*}
where $\Phi(x)$ is the cumulative distribution function of standard Gaussian random variable. For Rademacher vector $\xi$, note that $\{\xi_i\}$ is a i.i.d. random variable sequence with  $\E(\xi_i)=0$ and  $\E(\xi_i^2)=1$.
Recall that
\[
\PP\left(\frac{1}{n}\norm{\vA_\xi \vx}^2 < \epsilon \right) = \PP(\abs{\zeta_n} < \sqrt{\epsilon}).
\]
Then from Central Limit Theorem, we obtain that $\PP(\abs{\zeta_n} < \sqrt{\epsilon})$ tends to $2(1-\Phi(\sqrt{\epsilon}))$ with $n$ tending to infinity. Hence, we arrive at conclusion.
\end{proof}

We next give the proof of Theorem \ref{bound circulant}.

\begin{proof}[Proof of Theorem \ref{bound circulant}] Through out this proof, we assume that $\|\vx\|_0\leq s$.
Recall that $\Phi_\xi=\frac{1}{\sqrt{m}}P_\Omega \vA_\xi$. From Corollary \ref{co:Axi}, we obtain that
\begin{equation} \label{circulant}
 (1-\delta/4)\norm{\vx}^2 \le \frac{1}{n}\norm{\vA_\xi \vx}^2\le (1+\delta/4)\norm{\vx}^2, \quad \text{ for all  $s$ -sparse vectors } \vx
\end{equation}
holds
with probability at least $1-n^{-\log n \log^2 s}$ provided $s\lesssim \delta^2n/(\log^2 s\log^2 n)$.
 Cauchy-Schwarz inequality implies that  $\norms{\vx}_1\le \sqrt{s}\norm{\vx}$. Observe that
 \[
\norm{\Phi_\xi \vx}^2=\frac{1}{m}\sum_{j\in \Omega} |(\vA_\xi \vx)_j|^2.
\]
For the given multiset $\Omega$, choosing $\eta\ge \delta/(4c^2s) $ in Theorem \ref{lemma2} for a fixed circulant matrix $\vA_\xi$ with
$\norms{\vA_\xi}_\infty\le c$, then with probability at least $1-2^{-C\log n\log(s/\delta)}$ we have
\begin{equation}\label{submatrix}
  \frac{1}{n}(1-\delta/4)\norm{\vA_\xi \vx}^2-\delta/4\norms{\vx}_2^2 \le\norm{\Phi_\xi \vx}^2\le
  \frac{1}{n}(1+\delta/4)\norm{\vA_\xi \vx}^2+\delta/4\norms{\vx}_2^2
\end{equation}
provided $m=O\left(\log^2 (1/\delta)\cdot\delta^{-2}s\log^2(s/\delta)\log n\right)$.

Combining (\ref{circulant}) and (\ref{submatrix}), we obtain that
\begin{equation*}
 (1-\delta)\norm{\vx}^2\le \norm{\Phi_\xi \vx}^2\le (1+\delta)\norm{\vx}^2
\end{equation*}
holds with probability at least $1-2^{-C\log n\log(s/\delta)}-n^{-\log n \log^2 s}$, which arrives at the conclusion.

\end{proof}

\end{document}